\documentclass[sigconf]{acmart}
\usepackage{amssymb}
\usepackage{amsmath}
\usepackage{makecell}
\usepackage{balance}
\usepackage{float}
\usepackage{mathtools}
\usepackage{setspace}
\usepackage{array}
\usepackage{paralist}

\usepackage{stfloats}

\settopmatter{printacmref=false} 
\renewcommand\footnotetextcopyrightpermission[1]{} 
\pagestyle{plain} 
\pagenumbering{arabic}
\author{Yishai Oltchik}
\email{yishai.oltchik@inf.ethz.ch}
\affiliation{ETH Zurich}    
\author{Oded Schwartz}
\email{odedsc@cs.huji.ac.il}
\affiliation{Hebrew University of Jerusalem}

\title{Network Partitioning and Avoidable Contention}

\setcopyright{rightsretained}

\date{}

\begin{document}

\begin{abstract}\small\baselineskip=9pt
Network contention frequently dominates the run time of parallel algorithms and limits scaling performance.
Most previous studies mitigate or eliminate contention by utilizing one of several approaches: communication-minimizing algorithms; hotspot-avoiding routing schemes; topology-aware task mapping; or improving global network properties, such as bisection bandwidth, edge-expansion, partitioning, and network diameter.
In practice, parallel jobs often use only a fraction of a host system.
How do processor allocation policies affect contention within a partition?
We utilize edge-isoperimetric analysis of network graphs to determine whether a network partition has optimal internal bisection.
Increasing the bisection allows a more efficient use of the network resources, decreasing or completely eliminating the link contention.
We first study torus networks and characterize partition geometries that maximize internal bisection bandwidth.
We examine the allocation policies of Mira and JUQUEEN, the two largest publicly-accessible Blue~Gene/Q torus-based supercomputers.
Our analysis demonstrates that the bisection bandwidth of their current partitions can often be improved by changing the partitions' geometries.
These can yield up to a $\times 2$~speedup for contention-bound workloads.
Benchmarking experiments validate the predictions.
Our analysis applies to allocation policies of other networks.
\end{abstract}

\maketitle
\thispagestyle{empty}
\section{Introduction}
Network contention frequently dominates the run time of parallel algorithms and limits scaling performance~\cite{ballard2014communication}.
Most previous studies mitigate or eliminate contention by utilizing one of several approaches: communication-minimizing algorithms (cf. ~\cite{solomonik2011communication,caps,demmel2012communication}); hotspot-avoiding routing schemes (cf. ~\cite{singh2005load}); topology-aware task mapping (cf. ~\cite{bhatele2014optimizing}); or improving global network properties such as bisection bandwidth, edge-expansion (cf. ~\cite{besta2014slim, valadarsky2015xpander}), partitioning (cf. ~\cite{jain2017partitioning}), and network diameter (cf. ~\cite{kim2008technology}).
Parallel jobs running on a supercomputer or a cloud platform often do not utilize the entire machine at once.
Rather, the job is assigned a subset of the system's compute nodes and associated resources for its exclusive use\footnote{Some cloud platforms allow `multi-tenancy', in which case exclusivity is not guaranteed. This adds further challenge which we do not address in this paper.}.
Optimizing the internal bisection bandwidth of allocated partitions can decrease or completely eliminate the link contention of a parallel computation, improving overall performance for contention-bound workloads.

\paragraph*{Our contribution}
Using isoperimetric analysis, we study torus networks and characterize partition geometries that maximize internal bisection bandwidth.
Our analysis utilizes a novel generalization of Bollob{\'a}s and Leader's bounds on the edge-isoperimetric problem on torus graphs~\cite{BollobasLeader91}.
A solution was known for tori with dimensions of equal size, whereas our new bound applies to torus graphs with arbitrary dimension sizes.
This is useful, as the vast majority of torus networks with $3$ dimensions or more have unequal dimensions.
We apply isoperimetric analysis to compute node partition allocations allowed by the allocation of Mira and JUQUEEN, the two largest publicly-accessible Blue~Gene/Q torus-based supercomputers.
Our analysis demonstrates that the bisection bandwidth of their current partitions can often be improved by changing the partitions' geometries, yielding up to a $\times 2$~speedup for contention-bound workloads.
Benchmarking experiments on both systems validate the predictions.
We show an impact of 10\% speedup for fast matrix multiplication.
We also applied the analysis to the Blue~Gene/Q machine Sequoia, but no experiments were performed as it is no longer available for scientific research.
Lastly, we discuss network configurations for hypothetical Blue~Gene/Q systems which, despite having fewer compute and network resources, may perform better. They are predicted to improve upon the network performance of JUQUEEN by increasing the bisection bandwidth of partitions.
This work focuses first on the Blue~Gene/Q supercomputer series, but the application of our method to other networks topologies such as hypercubes, Dragonfly, Slim Fly, and HyperX, is also described in detail.
\paragraph*{Related Work}
Bisection bandwidth is a standard metric for network performance.
An in-depth description of the Blue~Gene/Q topology appears in~\cite{underthehood}, which also includes an analysis of its bisection bandwidth, and outlines traffic patterns that are challenging for the network to efficiently route.
However, they do not discuss the bisection bandwidth of network partitions.
Finding worst-case traffic patterns for an arbitrary network topology can be non-trivial in the general case.
A method for generating ``near-worst-case" traffic patterns is shown in~\cite{jyothi2016measuring}.
The edge-isoperimetric problem (see definition in Section~\ref{section:isoperimetric}) is a well-known problem in combinatorics, and general solutions have been shown for several graphs that either directly correspond or are very similar to network topologies used in practice.
These include:
hypercubes~\cite{harper1964optimal}, 
cubic tori~\cite{BollobasLeader91},
Cartesian products of cliques~\cite{lindsey1964assignment},
and 2-dimensional mesh grids~\cite{ahlswede1995edge}.
The edge-isoperimetric problem provides a tight bound on the bandwidth between two arbitrary sides of the network.
Since both sides may have the same size, the edge-isoperimetric problem generalizes the problem of determining the bisection bandwidth of a graph.
Closely related to the edge-isoperimetric problem is the notion of small-set expansion in graphs (see Section~\ref{section:prelim}). Indeed, if a graph $G$ is $d$-regular then the two problems are essentially equivalent.
Spectral methods that can be used to approximate the small-set expansion of arbitrary graphs are described in~\cite{lee2014multiway}.
The small set expansion of a network graph is used in~\cite{contention} to derive lower bounds on the contention costs, and potentially determine when a given parallel algorithm on a given system is inevitably asymptotically contention-bound.
\paragraph*{Paper Organization}
In Section~\ref{section:prelim} we provide preliminaries on the edge-isoperimetric problem on torus graphs, and on the IBM Blue~Gene/Q architecture.
In Section~\ref{section:analysis} we present isoperimetric analysis of general tori and apply it to the partitions of Mira and JUQUEEN, concluding with alternative, improved partitions.
In Section~\ref{section:evaluation} we perform experiments on Mira and JUQUEEN, and discuss the results.
In Section~\ref{section:discussion} we present implications of our analysis on networks design, discuss the applicability of our methods to other network topologies including ToFu, Dragonfly, Fat-Tree, and HyperX, and outline future work.

\section{Preliminaries}\label{section:prelim}
A main application of our method is improving allocation policies of the torus-based Blue~Gene/Q systems.
We begin by defining torus graphs and our primary analysis tool of the edge-isoperimetric problem.

\subsection*{Torus graphs}
Let $D$ and $a_1, \ldots, a_D$ be integers, and let $G = \left(V,E\right)$ be a graph.
If $V = \left[a_1\right] \times \ldots \times \left[a_D\right]$, and every two vertices $u = \left(u_1, \ldots, u_D\right), v = \left(v_1, \ldots, v_D\right)$ are adjacent if and only if $\exists k$ such that $u_k = v_k \pm 1 \bmod a_k$ and $\forall j \neq k, u_j = v_j$, then $G$ is said to be a \emph{$D$-torus} (also, $D$-dimensional torus).
If $a_1 = \ldots = a_D$ then $G$ is said to be a \emph{cubic} torus.

Let $G=\left(V,E\right)$ be a graph, and let $A,B\subset V$.
Then, the \emph{perimeter} of $A$ is $E\left(A, \overline{A}\right) = \left\lbrace u,v \mid u \in A, v \notin A\right\rbrace$ and the \emph{interior} of $A$ is $E\left(A, A\right) = \left\lbrace u,v \mid u \in A, v \in A\right\rbrace$.
For any $k$-regular graph, the following equation holds:
\begin{equation}\label{eqn:graph-regular-interior-perimeter}
\forall A \subseteq V, k \left|A\right| = 2 \left|E\left(A,A\right)\right| + \left|E\left(A,\overline{A}\right)\right|
\end{equation}
\paragraph*{Edge-isoperimetric problem on torus graphs}\label{section:isoperimetric}
The edge-isoperimetric problem is defined as follows: given a graph $G=\left(V, E\right)$ and some integer $t \le \frac{\left|V\right|}{2}$, find $S \subset V$ with $\left|S\right| = t$ of minimal perimeter size. That is, find $S$ such that:
$$\left|E\left(S, \bar{S}\right)\right| = \min_{\substack{A \subset V \\ \left|A\right| = t}}  {\left\{\left|E\left(A, \bar{A}\right)\right|\right\} }$$
Such a set $S$ is said to be \emph{isoperimetric}.
Note that by Equation~\ref{eqn:graph-regular-interior-perimeter}, for $k$-regular graphs, minimizing the perimeter is equivalent to maximizing the interior.

If $G$ is a cubic torus, then the following bound of Bollob{\'a}s and Leader~\cite{BollobasLeader91} applies:
\begin{theorem}[Edge-isoperimetric ineq. for cubic tori]\label{thm:cubic-torus}
Let $G=\left(V,E\right)$ be a cubic $D$-dimensional torus such that $V = \left[n\right]^D$, and let $t \le \frac{n^D}{2}$. Then $\forall S\subset V$ with $\left\vert S \right\vert = t$:
\begin{equation}\label{cubic-torus-sse}
\left|E\left(S,\overline{S}\right)\right| \ge 
\min_{r\in\left\{ 0,\ldots,D-1\right\} }
2\left(D-r\right)\cdot n^{\frac{r}{D-r}}\cdot t^{\frac{D-r-1}{D-r}}
\end{equation}
\end{theorem}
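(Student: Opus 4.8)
We would prove Theorem~\ref{thm:cubic-torus} by induction on the dimension $D$. Throughout, $n$ is fixed; write $\varphi_d(s) := \min_{r \in \{0,\ldots,d-1\}} 2(d-r)\,n^{\frac{r}{d-r}}\,s^{\frac{d-r-1}{d-r}}$ for the bound in~\eqref{cubic-torus-sse} in dimension $d$, so the claim is $\bigl|E(S,\overline{S})\bigr| \ge \varphi_D(|S|)$. The base case $D=1$ is the cycle on $[n]$ with adjacency modulo $n$: a proper nonempty subset is a disjoint union of arcs and hence has at least two perimeter edges, which is exactly $\varphi_1 \equiv 2$. For the inductive step, fix $S \subseteq [n]^D$ with $|S| = t$ and slice along the last coordinate: for each $j \in [n]$ set $S_j := \{x \in [n]^{D-1} : (x,j) \in S\}$ and $t_j := |S_j|$, so $\sum_{j} t_j = t$. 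Every perimeter edge of $S$ either lies inside a single slice --- and then it is a perimeter edge of $S_j$ in the $(D-1)$-torus --- or joins slice $j$ to slice $j+1$, and the number of the latter between two consecutive slices equals $|S_j \triangle S_{j+1}|$. Thus, with indices modulo $n$,
\[
\bigl|E(S,\overline{S})\bigr| \;=\; \sum_{j \in [n]} \bigl|E_{D-1}(S_j,\overline{S_j})\bigr| \;+\; \sum_{j \in [n]} \bigl|S_j \triangle S_{j+1}\bigr|.
\]

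We bound the two sums from below. For the intra-slice terms, the induction hypothesis applied to $S_j$ (or to its complement when $t_j > n^{D-1}/2$) gives $\bigl|E_{D-1}(S_j,\overline{S_j})\bigr| \ge \varphi_{D-1}\bigl(\min\{t_j,\, n^{D-1}-t_j\}\bigr)$. For the inter-slice terms, $\bigl|S_j \triangle S_{j+1}\bigr| \ge \bigl|\,t_j - t_{j+1}\,\bigr|$. Since only a lower bound is sought, it therefore suffices to prove the one-dimensional statement: for all reals $t_0,\ldots,t_{n-1} \ge 0$ with $\sum_j t_j = t$, one has $\sum_j \varphi_{D-1}\bigl(\min\{t_j, n^{D-1}-t_j\}\bigr) + \sum_j |t_j - t_{j+1}| \ge \varphi_D(t)$, where we relax to real sizes since both sides are continuous in the $t_j$. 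Two extremal profiles should be the only ones that matter. The uniform profile $t_j \equiv t/n$ has zero variation and value $n\,\varphi_{D-1}(t/n)$, which --- branch by branch, via the substitution $r' \mapsto r'+1$ --- equals exactly the terms with $r \ge 1$ of $\varphi_D(t)$; this is the ``tube'' regime. A profile in which the nonzero slices form one contiguous block of common size $t/m$ has value $m\,\varphi_{D-1}(t/m) + 2t/m$; taking the $r'=0$ branch of $\varphi_{D-1}$ and minimizing over $m$ by AM--GM gives the term $r=0$, namely $2D\,t^{(D-1)/D}$; this is the ``sub-cube'' regime.

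The main obstacle is the one-dimensional inequality: ruling out any profile that beats both regimes. The first move is to rearrange the $t_j$ into a bitonic cyclic order; this leaves $\sum_j \varphi_{D-1}(\cdot)$ unchanged and does not increase $\sum_j |t_j - t_{j+1}|$, which then equals $2(\max_j t_j - \min_j t_j)$. If some slice is empty, write $M := \max_j t_j$; concavity of $\varphi_{D-1}$ together with $\varphi_{D-1}(0)=0$ yields $\sum_j \varphi_{D-1}(t_j) \ge \tfrac{t}{M}\,\varphi_{D-1}(M)$, so the left-hand side is at least $\min_{0<M\le n^{D-1}}\bigl(\tfrac{t}{M}\varphi_{D-1}(M) + 2M\bigr)$, which the AM--GM computation above identifies with $\varphi_D(t)$. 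If every slice is nonempty, one compares with the uniform profile. The genuinely delicate case is that every slice is nonempty but the profile is not uniform: here concavity of $\varphi_{D-1}$ pushes toward polarizing the $t_j$ while the variation term penalizes it, so neither term alone is extremal and a more careful trade-off argument is needed --- this is the technical heart of the induction (and the place where the hypothesis $t \le n^D/2$ enters). Care is also required with the truncation $\min\{t_j, n^{D-1}-t_j\}$ coming from complementing inside a slice, and with confirming that pairing a block of size $m<n$ with a branch $r'\ge 1$ of $\varphi_{D-1}$ never undercuts $\varphi_D(t)$; both are routine once the reduction is in place. An alternative route is to reduce, via coordinate-wise compression, to the case where $S$ is a product of arcs and full cycles, and then optimize the side lengths by AM--GM over $r \in \{0,\ldots,D-1\}$ directly; this avoids the sequence optimization but shifts the burden onto showing that each compression does not increase the perimeter in the remaining directions --- delicate on a torus, since arcs of a cycle, unlike initial segments of a path, are not canonically nested.
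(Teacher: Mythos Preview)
The paper does not prove Theorem~\ref{thm:cubic-torus} at all: it is quoted as a known result of Bollob{\'a}s and Leader~\cite{BollobasLeader91}, and the paper's own contribution begins only with the non-cubic generalization (Theorem~\ref{thm:general-torus}), which moreover is proved only for cuboid subsets via the two short Lemmas~\ref{claim:attain-cut} and~\ref{claim:subsume-minimal-dims}. There is therefore nothing in the paper to compare your argument against.

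That said, your sketch is in the spirit of the original Bollob{\'a}s--Leader proof (slice, apply induction to slices, control the inter-slice variation), but as written it is not a proof: you yourself identify the decisive step --- the one-dimensional inequality $\sum_j \varphi_{D-1}\bigl(\min\{t_j,n^{D-1}-t_j\}\bigr)+\sum_j|t_j-t_{j+1}|\ge\varphi_D(t)$ in the regime where all slices are nonempty and the profile is not uniform --- and leave it open. Two concrete obstacles remain. First, after you pass from $|S_j\triangle S_{j+1}|$ to $|t_j-t_{j+1}|$ you have discarded information; Bollob{\'a}s and Leader avoid this loss by \emph{first} compressing so that the slices are nested, and you would need either that step or a direct proof that the relaxed sequence problem still has the right minimum. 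Second, the function $s\mapsto\varphi_{D-1}\bigl(\min\{s,n^{D-1}-s\}\bigr)$ is tent-shaped rather than concave on $[0,n^{D-1}]$, so the ``concavity pushes toward polarizing'' heuristic does not apply cleanly once some $t_j$ exceed $n^{D-1}/2$; this is exactly where the hypothesis $t\le n^D/2$ has to be used carefully, and your sketch does not carry that through. The alternative compression route you mention at the end is in fact what Bollob{\'a}s and Leader do, and the non-nestedness issue on the torus that you flag is real and is handled in their paper by working with a specific order on $[n]^D$; reproducing that would complete the argument, but as it stands the proposal is an outline with its central lemma unproved.
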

For $r$ such that $\left(\frac{t}{n^r}\right)^{\frac{1}{D-r}}$ is an integer, we define $S'\subset V$ such that:
$$S'= \left[n\right]^r \times \left[\left(\frac{t}{n^r}\right)^{\frac{1}{D-r}}\right]^{D-r}$$
In this case, $S'$ is a $D$-dimensional cuboid with $r$ dimensions of length $n$, and $D-r$ dimensions of length $\left(\frac{t}{n^r}\right)^{\frac{1}{D-r}}$.
Each vertex contributes $2\left(D-r\right)$ edges to the cut.
Since $\left\vert S' \right\vert = t$, a simple counting argument leads to: $$\left\vert E(S', \bar{S'})\right\vert = 2r \cdot t^{\frac{D-r-1}{D-r}} \cdot n^{\frac{r}{D-r}}$$
Therefore the bound presented in Theorem~\ref{thm:cubic-torus} is tight for certain values of $t$.

\paragraph*{Small Set Expansion}
The small-set expansion of a graph $G = \left(V,E \right)$, denoted $h_t\left(G\right)$, is defined:
$$h_t \left(G\right) = \min_{\substack{A \subset V \\ \left|A\right| \leq t}} 
\frac{\left|E\left(A, \overline{A}\right)\right|}
{\left|E\left(A, A\right)\right| + \left|E\left(A, \overline{A}\right)\right|}$$
Small-set expansion can be used to test whether a given network will be inevitably asymptotically contention-bound when executing a parallel algorithm with known per-processor communication costs~\cite{contention}.
Since the small-set expansion is attained by the bisection for all networks and partitions considered in this work, it will suffice for us to consider only the bisection bandwidth.
\subsection*{Blue~Gene/Q Systems}\label{section:bgq}
IBM Blue~Gene/Q systems~\cite{chen2012ibm} have $5$D torus network topologies where the size of at least one dimension is exactly $2$.
The bisection bandwidth of a Blue~Gene/Q system is $2 \cdot \frac{N}{L} \cdot B$, where $N$ is the number of nodes, $L$ is size of the longest dimension, and $B$ is the capacity of a single bidirectional link~\cite{underthehood}.
A midplane in the Blue~Gene/Q topology is a physical arrangement of $512$ compute nodes, internally connected by a $5$D torus network with dimensions $4 \times 4 \times 4 \times 4 \times 2$.
The last dimension, of length $2$, is internal to the midplane.
A physical rack in a Blue Gene/Q system consists of two midplanes.
$13$ Blue~Gene/Q systems appear in the November 2017 list of top 500 supercomputers~\cite{TOP500-nov17}.
The network is physically constructed in such a way that partitions may have wrap-around links in a given dimension even when they do not fully cover that dimension in the entire network.
All partitions discussed in this work are $4$-dimensional sub-tori where some dimensions may have size $1$.
There are no published limits to the maximal size of a Blue~Gene/Q system or to the lengths of any\footnote{Except the $5$th dimension, which has size $2$ and is internal to each midplane.} dimension~\cite{underthehood}.

To simplify notation, we always present the dimensions of a torus network and its partitions in sorted order by length.
This canonical representation treats partitions whose geometries are identical up to rotations as one.
Therefore, a machine with network size $2 \times 2 \times 2 \times 1 \times 1$ fits $4$ partitions with geometry $2 \times 1 \times 1 \times 1 \times 1$.

Outside of jobs which require an exceptionally small amount of compute nodes, all partitions in Blue~Gene/Q systems are defined by cuboids (Cartesian products of chains and cycles) consisting of whole midplanes.
We therefore represent the Blue~Gene/Q network and its partitions as $4$-dimensional tori of midplanes.
For example, consider a $6$-midplane system of dimensions $3 \times 2 \times 1 \times 1$. In terms of compute nodes, this system has $3072$ compute nodes and network size $12 \times 8 \times 4 \times 4 \times 2$.
The best possible $1536$-compute node partition of this system has dimensions $12 \times 4 \times 4 \times 4 \times 2$ and $256$ links in its bisection.
An alternate partition with dimensions $8 \times 6 \times 4 \times 4 \times 2$ would have the same node count, but a greater bisection of $384$. However, since its largest dimension consists of $1.5$ midplanes it is not supported by the Blue~Gene/Q topology.
Such a partition could be constructed by over-provisioning an additional midplane and defining a partition with dimensions $8 \times 8 \times 4 \times 4 \times 2$.
Our benchmarks and applications all use message-passing communication with MPI, which allows the individual processes (often referred to as \emph{ranks}) in the computations to communicate directly with each other.
Unless explicitly stated otherwise, each compute node is assigned only one MPI rank.
This allows an improved bisection bandwidth of $512$ links, but at the cost of requiring additional compute nodes. We next introduce the Blue~Gene/Q systems Mira and JUQUEEN.
\paragraph{Mira}\label{sub:mira}
Installed at Argonne National Laboratory~\cite{mira}, it is the largest Blue~Gene/Q system accessible for scientific research. Mira is ranked $24$th in the July 2019 Top 500 supercomputers~\cite{TOP500-nov17}.
It has $49152$ compute nodes, with network size $16 \times 16 \times 12 \times 8 \times 2$, or $4 \times 4 \times 3 \times 2$ midplanes.

\paragraph{JUQUEEN}\label{sub:juqueen}
Installed at J{\"u}lich Supercomputing Centre, JUQUEEN is the second-largest Blue~Gene/Q system accessible for scientific research. It was\footnote{JUQUEEN was since dismantled and does not appear in later lists.} ranked $22$nd in the November 2017 Top 500 supercomputers~\cite{TOP500-nov17}.
JUQUEEN has $28672$ compute nodes, and network size $28 \times 8 \times 8 \times 8 \times 2$, or $7 \times 2 \times 2 \times 2$ midplanes.

\section{Theoretical Analysis}\label{section:analysis}
Using isoperimetric analysis, we identify allocation policies that are not optimal; namely, we point to partitions with sub-optimal internal bisection bandwidth.
Whenever such partitions exist, we find partition geometries with optimal bisection bandwidth that are likely to reduce link contention.

\subsection{The Edge-Isoperimetric Problem}
We obtain a novel generalization of Theorem~\ref{thm:cubic-torus} to arbitrary torus graphs.
We show that the bound is optimal for cuboid subsets, and conjecture that it is optimal for arbitrary subsets as well.
\begin{theorem}[Edge-isoperimetric ineq. for tori]\label{thm:general-torus}
    Let $G=\left(V,E\right)$ be a $D$-dimensional torus with $V = \left[a_1\right] \times \left[a_2\right] \times \ldots \times \left[a_D\right]$, and $t \le \frac{\left|V\right|}{2}$. Suppose, without loss of generality, that $a_1 \ge a_2 \ge \ldots \ge a_D$. Then, for any cuboid $S\subset V, \left\vert S \right\vert = t$:
    \begin{equation}\label{general-torus-sse}
    \left|E\left(S,\overline{S}\right)\right|\ge \min_{r\in\left\{ 0,\ldots,D-1\right\} }2\left(D-r\right)
    \left(\prod_{i=0}^{r-1} a_{D-i}\right)^{\frac{1}{D-r}}
    t^{\frac{D-r-1}{D-r}}
    \end{equation}
Like Theorem~\ref{thm:cubic-torus}, our bound can be attained in some cases.
Let $k = \prod_{i=0}^{r-1} a_{D-i}$. If $\exists r$ such that $\left(\frac{t}{k}\right)^{\frac{1}{D-r}}$ is an integer, define the cuboid
$S_r = \left[\left(\frac{t}{k}\right)^\frac{1}{D-r}\right]^{D-r} \times \left[a_{D-r+1} \right] \times \ldots \times \left[a_D \right]$.
\end{theorem}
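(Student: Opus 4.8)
The plan is to exploit that a cuboid is a Cartesian product, reducing the statement to a one-dimensional fact about cycles plus a single application of the AM--GM inequality. Write $S = C_1\times\cdots\times C_D$ with $C_i\subseteq[a_i]$ and $s_i:=|C_i|$, so that $\prod_i s_i = t$. First I would decompose the perimeter coordinate by coordinate: every edge leaving $S$ changes exactly one coordinate $k$ while its two endpoints agree on all others, so
\[
\left|E\left(S,\overline S\right)\right| = \sum_{k=1}^{D} b_k(C_k)\prod_{j\ne k}s_j = \sum_{k=1}^{D} b_k(C_k)\,\frac{t}{s_k},
\]
where $b_k(C_k)$ denotes the number of edges of the $k$-th factor cycle $[a_k]$ with exactly one endpoint in $C_k$. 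Each factor is a cycle (with a length-$2$ dimension taken as a double edge), so by the elementary edge-isoperimetric inequality for cycles, $b_k(C_k)=0$ when $C_k$ is empty or all of $[a_k]$ and $b_k(C_k)\ge 2$ otherwise. For the lower bound we only need that $b_k(C_k)\ge 2$ at every non-full coordinate; equivalently, in the extremal problem one may assume each non-full $C_k$ is a single arc.

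Next I would split the coordinates into the set $F$ of \emph{full} ones (those with $s_k=a_k$) and the rest, and set $m:=D-|F|$; since $t\le|V|/2$ forces $S\ne V$, we have $m\ge 1$. Put $P:=\prod_{k\in F}a_k$, so that $\prod_{k\notin F}s_k=t/P$. The decomposition above, together with AM--GM applied to the $m$ positive numbers $\{1/s_k : k\notin F\}$, gives
\[
\left|E\left(S,\overline S\right)\right| \ge \sum_{k\notin F}\frac{2t}{s_k} = 2t\sum_{k\notin F}\frac{1}{s_k} \ge 2t\,m\left(\prod_{k\notin F}\frac{1}{s_k}\right)^{1/m} = 2m\,P^{1/m}\,t^{\frac{m-1}{m}}.
\]
Finally I would observe that $P$ is a product of $|F|=D-m$ of the $a_i$, hence, since $a_1\ge\cdots\ge a_D$, that $P\ge\prod_{i=0}^{D-m-1}a_{D-i}$; substituting $r:=D-m$ and using monotonicity of $x\mapsto x^{1/(D-r)}$ turns the last display into
\[
\left|E\left(S,\overline S\right)\right| \ge 2(D-r)\left(\prod_{i=0}^{r-1}a_{D-i}\right)^{1/(D-r)}t^{\frac{D-r-1}{D-r}}
\]
for the value of $r$ determined by $S$, and this is at least the minimum of the right-hand side over $r\in\{0,\dots,D-1\}$ --- that is, \eqref{general-torus-sse}.

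For the attainability claim I would simply evaluate the perimeter of $S_r$: it is full in its last $r$ coordinates (lengths $a_{D-r+1},\dots,a_D$, with product $k$) and is an arc of length $\ell:=(t/k)^{1/(D-r)}$ in each of its first $D-r$ coordinates, so $|S_r|=\ell^{D-r}\,k=t$; provided $\ell\le a_{D-r}$ (so that $S_r\subseteq V$ and its first $D-r$ coordinates are genuinely non-full), each of those $D-r$ arcs contributes $2t/\ell$ edges and the full coordinates contribute none, for a total of $2(D-r)\,k^{1/(D-r)}\,t^{(D-r-1)/(D-r)}$ --- exactly the $r$-th term of the minimum. Thus \eqref{general-torus-sse} is tight at least whenever the minimizing index $r$ is one for which $(t/k)^{1/(D-r)}$ is an integer.

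Because the statement is restricted to cuboids, the argument is essentially bookkeeping, and the points needing care are: (i) invoking the cycle bound uniformly, including the degenerate factors $a_i\in\{1,2\}$; (ii) keeping every inequality oriented toward a \emph{lower} bound, so that both replacing $P$ by the product of the smallest $a_i$ and then minimizing over $r$ are legitimate weakenings; and (iii) the side condition $\ell\le a_{D-r}$ in the tightness part (which holds for the minimizing $r$, as in Theorem~\ref{thm:cubic-torus}). I expect the real difficulty to lie not here but in the conjectured extension to arbitrary subsets: that would require a compression argument of Bollob\'as--Leader type showing that cuboid-like sets are extremal among all $t$-subsets, which the product structure does not by itself provide.
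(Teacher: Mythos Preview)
Your argument is correct and in fact somewhat cleaner than the paper's own proof. The paper proceeds via two lemmas: Lemma~\ref{claim:attain-cut} computes the cut of the canonical cuboid $S_r$ by counting its $(D-1)$-dimensional faces, and Lemma~\ref{claim:subsume-minimal-dims} then compares an arbitrary cuboid $A$ with exactly $r$ full dimensions to $S_r$, arguing case-wise that (i) if the non-full side-lengths of $A$ are not all equal then the projection onto two of them is an ``oblong rectangle'' rather than a square and therefore has larger perimeter, and (ii) if $A$ fills a larger dimension than $S_r$ does then each face of $A$ is bigger by a factor $a_j/a_{D-r}$. Your route packages the same content more transparently: the product decomposition $|E(S,\overline S)|=\sum_k b_k(C_k)\,t/s_k$, the cycle bound $b_k\ge 2$ on non-full factors, and a single AM--GM step that simultaneously handles both of the paper's cases. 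What you gain is brevity and rigor---the paper's ``oblong versus square'' argument is exactly AM--GM in disguise but is never made explicit, and its treatment of case (ii) is sketchy; what the paper gains is only that it exhibits the extremizer $S_r$ before proving the inequality, which you do at the end anyway. Your remarks about the degenerate length-$2$ factors (as double edges) and about the side condition $\ell\le a_{D-r}$ in the tightness part are also more carefully stated than in the paper.
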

Our proof strategy for Theorem~\ref{thm:general-torus} is as follows: in Lemma~\ref{claim:attain-cut} we show an explicit construction of a class of cuboid sets in general torus graphs whose cut size matches Equation~\ref{general-torus-sse}.
In Lemma~\ref{claim:subsume-minimal-dims} we show these sets are isoperimetric, thereby completing the proof.

\begin{lemma}\label{claim:attain-cut}
Let $G = \left(V,E\right)$ be a $D$-dimensional torus with $V = \left[a_1\right] \times \ldots \times \left[a_D\right]$, and let $t,k,r'$ be integers such that $t \leq \frac{\left|V\right|}{2}$ and $\left(\frac{t}{k}\right)^\frac{1}{D-r'}$ is an integer.
Define $S_{r'}$ as in Theorem~\ref{thm:general-torus}, with $\arg\min = r'$.
Then, $S_{r'}$ maintains:
$$\left|E\left(S_{r'},\overline{S_{r'}}\right)\right| = 2\left(D-r'\right) \left(\prod_{i=0}^{r'-1} a_{D-i}\right)^{\frac{1}{D-r'}} t^{\frac{D-r'-1}{D-r'}}$$
\end{lemma}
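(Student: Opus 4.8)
The plan is to directly compute the perimeter of the explicit cuboid $S_{r'}$ by a counting argument, exactly as was sketched for the cubic case following Theorem~\ref{thm:cubic-torus}. Write $S_{r'} = \left[b\right]^{D-r'} \times \left[a_{D-r'+1}\right] \times \cdots \times \left[a_D\right]$, where $b = \left(\frac{t}{k}\right)^{1/(D-r')}$ is an integer by hypothesis and $k = \prod_{i=0}^{r'-1} a_{D-i}$. First I would verify the cardinality: $\left\vert S_{r'}\right\vert = b^{D-r'} \cdot \prod_{i=0}^{r'-1} a_{D-i} = b^{D-r'} \cdot k = t$, which uses the defining relation $b^{D-r'} = t/k$. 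This confirms $S_{r'}$ is a legitimate candidate of the required size.

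Next I would identify which edges cross the cut. Since $S_{r'}$ is a Cartesian product of sub-intervals, an edge leaves $S_{r'}$ precisely when it moves along one of the first $D-r'$ coordinates (those of length $b < a_i$, so the interval $\left[b\right]$ is a proper sub-chain of the cycle $\left[a_i\right]$ and has two boundary vertices in that direction), and it stays inside whenever it moves along one of the last $r'$ coordinates, because there $S_{r'}$ occupies the full cycle $\left[a_{D-r'+1}\right],\ldots,\left[a_D\right]$ and the wrap-around keeps the edge internal. Here I must note the mild subtlety that the bound's hypothesis $t \le \left|V\right|/2$ together with the sorted order $a_1 \ge \cdots \ge a_D$ should guarantee $b \le a_i$ for $i \le D-r'$ at the minimizing $r'$ — I would check that the choice $\arg\min = r'$ forces the geometry to be a genuine cuboid (a degenerate case $b = a_i$ would merely move that dimension into the "full" group, consistent with a different value of $r$), so without loss of generality we may assume the proper-sub-chain situation.

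For each of the $D-r'$ "short" directions, the perimeter contribution is the number of vertices of $S_{r'}$ lying on either of the two boundary layers orthogonal to that direction. Fixing direction $j \le D-r'$, a boundary layer is obtained by pinning the $j$-th coordinate to an endpoint of $\left[b\right]$; the remaining coordinates range over $\left[b\right]^{D-r'-1} \times \left[a_{D-r'+1}\right] \times \cdots \times \left[a_D\right]$, a set of size $b^{D-r'-1} \cdot k = t/b = t^{(D-r'-1)/(D-r')} \cdot k^{1/(D-r')}$. There are two such layers per direction and $D-r'$ directions, giving
\begin{equation*}
\left\vert E\left(S_{r'},\overline{S_{r'}}\right)\right\vert = 2\left(D-r'\right) \cdot t^{\frac{D-r'-1}{D-r'}} \cdot k^{\frac{1}{D-r'}} = 2\left(D-r'\right)\left(\prod_{i=0}^{r'-1} a_{D-i}\right)^{\frac{1}{D-r'}} t^{\frac{D-r'-1}{D-r'}},
\end{equation*}
which is exactly the claimed expression. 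The main obstacle is not the arithmetic, which is routine, but the bookkeeping to ensure the construction is well-defined: one must confirm that at the minimizing index $r'$ the chain length $b$ really does satisfy $b \le a_{D-r'}$ (so the last $r'$ dimensions are the full ones and the first $D-r'$ are proper sub-chains), and that the edge-counting is not disturbed by coincidences such as $b=1$ or $b = a_i$; I would handle these by observing that they correspond to boundary values of $r$ already included in the minimization range $\{0,\ldots,D-1\}$, so the formula remains valid by continuity of the counting argument across these cases.
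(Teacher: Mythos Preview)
Your proposal is correct and takes essentially the same approach as the paper: a direct face-counting argument on the cuboid $S_{r'}$, observing that fully-covered dimensions contribute nothing to the cut while each of the $D-r'$ ``short'' dimensions contributes two faces of size $t/b$. The only organizational difference is that the paper separates out the case where some $a_i = 2$ (invoking Harper for the pure hypercube and absorbing length-$2$ dimensions into the covered block in the mixed case), whereas you treat all cases uniformly and defer the degenerate situations $b=1$ and $b=a_i$ to the minimization over $r$; both treatments are at comparable levels of rigor and arrive at the same count.
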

\begin{proof}
If $a_1 = \ldots = a_D = 2$, then by Harper~\cite{harper1964optimal}, \(S_{r'}\) is an isoperimetric set maintaining the desired cut size, and the lemma follows.
If all dimension lengths are strictly greater than \(2\), then we show the cut size directly:
We count the edges in $E\left(S_{r'}, \overline{S_{r'}}\right)$ by considering the size of each $\left(D-1\right)$-dimensional face of the cuboid $S_{r'}$.
Faces in dimensions where $S_{r'_i} = a_i$ contribute no edges to the cut.
There are $\frac{t}{\left(\frac{t}{k}\right)^\frac{1}{D-r}} = \left(\prod_{i=0}^{r-1} a_{D-i}\right)^{\frac{1}{D-r}} t^{\frac{D-r-1}{D-r}}$ vertices on each remaining face.
Then: \[\left|E\left(S_{r'},\overline{S_{r'}}\right)\right| = 2\left(D-r'\right) \left(\prod_{i=0}^{r'-1} a_{D-i}\right)^{\frac{1}{D-r'}} t^{\frac{D-r'-1}{D-r'}}\]
$S_{r'}$ is similar to $S'$ as defined in Equation~\ref{cubic-torus-sse}, but instead of having $r'$ dimensions of length $n$, the dimension lengths are $a_D, \ldots, a_{D-r'+1}$.
If only some dimensions \(a_{D-k+1}, \ldots, a_{D}\) have lengths \(2\), then we choose \(S_{r'}\) such that they are all covered, and then proceed as before with \(t' = \frac{t}{2^k}\), and the same cut is attained.
\end{proof}

\begin{lemma}\label{claim:subsume-minimal-dims}
    Let $G, t, D, k, r, S_r$ be defined as in Lemma~\ref{claim:attain-cut}.
    Let $A \subset V$ be some cuboid $\left[A_1 \right] \times \ldots \times \left[A_D\right]$ with $\left|A\right| = t$. Suppose there exist exactly $r$ indices $i_1, \ldots, i_r$ that maintain $A_{i_k} = a_{i_k}$. Then, $\left|E\left(S_r,\overline{S_r}\right)\right| \le \left|E\left(A,\overline{A}\right)\right|$.
\end{lemma}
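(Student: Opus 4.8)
The plan is to turn the lemma into a one-line application of the arithmetic--geometric mean (AM--GM) inequality, once the cut of an arbitrary cuboid has been written out explicitly and minimized over all cuboids that cover exactly $r$ of the $D$ dimensions.

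\emph{Step 1: the cut of a cuboid.} I would first record the exact size of $E\left(A,\overline{A}\right)$ for a cuboid $A=\left[A_1\right]\times\cdots\times\left[A_D\right]$ in $G$. Exactly as in the proof of Lemma~\ref{claim:attain-cut}, any dimension of length $2$ is folded into the covered set first, so that one may assume every $a_j\ge 3$. Then a coordinate $j$ with $A_j=a_j$ wraps around and contributes no cut edge, whereas a coordinate $j$ with $A_j<a_j$ contributes two $\left(D-1\right)$-dimensional faces, each consisting of $t/A_j$ vertices. Writing $I=\left\{i_1,\ldots,i_r\right\}$ for the set of the $r$ covered indices, this yields
\[
\left|E\left(A,\overline{A}\right)\right|=2t\sum_{j\notin I}\frac{1}{A_j}.
\]

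\emph{Step 2: AM--GM over the free coordinates.} Since $\left|A\right|=t$, the $D-r$ uncovered side lengths satisfy $\prod_{j\notin I}A_j=\frac{t}{\prod_{j\in I}a_j}$. Applying AM--GM to the positive reals $1/A_j$ with $j\notin I$ (their being integers is irrelevant, as the bound holds for all positive reals) gives
\[
\sum_{j\notin I}\frac{1}{A_j}\ \ge\ \left(D-r\right)\left(\frac{\prod_{j\in I}a_j}{t}\right)^{\frac{1}{D-r}},
\]
and therefore
\[
\left|E\left(A,\overline{A}\right)\right|\ \ge\ 2\left(D-r\right)t^{\frac{D-r-1}{D-r}}\left(\prod_{j\in I}a_j\right)^{\frac{1}{D-r}}.
\]

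\emph{Step 3: the $r$ smallest dimensions win, and conclusion.} Because $a_1\ge a_2\ge\cdots\ge a_D>0$, the product $\prod_{j\in I}a_j$ over any $r$-subset $I$ is at least $\prod_{i=0}^{r-1}a_{D-i}$, the product of the $r$ smallest lengths, which is exactly the constant $k$ in the definition of $S_r$. Combining this with the previous display and with the value of $\left|E\left(S_r,\overline{S_r}\right)\right|$ supplied by Lemma~\ref{claim:attain-cut} gives
\[
\left|E\left(A,\overline{A}\right)\right|\ \ge\ 2\left(D-r\right)t^{\frac{D-r-1}{D-r}}\left(\prod_{i=0}^{r-1}a_{D-i}\right)^{\frac{1}{D-r}}=\left|E\left(S_r,\overline{S_r}\right)\right|,
\]
which is the claim.

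\emph{Where the difficulty lies.} The mathematical content (Steps 2--3) is essentially immediate; the care is all in Step 1 and in the interface with $S_r$. One must handle length-$2$ coordinates (where a nominal face carries a doubled edge), the case $A_j=1$, and, crucially, check that the common side $\left(t/k\right)^{1/(D-r)}$ of $S_r$ is strictly smaller than each of the $D-r$ largest lengths, so that $S_r$ indeed covers exactly $r$ coordinates and Step~1's formula applies to it verbatim. This is precisely the point at which the hypothesis that $r$ attains the minimum in Theorem~\ref{thm:general-torus} enters: were $\left(t/k\right)^{1/(D-r)}\ge a_{D-r}$, one checks that replacing $r$ by $r+1$ would strictly decrease the bound of Equation~\ref{general-torus-sse}, contradicting minimality.
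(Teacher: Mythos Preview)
Your proof is correct and follows the same underlying line as the paper's: write the perimeter of a cuboid as $2t\sum_{j\notin I}1/A_j$, then show that $S_r$ minimises this among cuboids covering exactly $r$ dimensions. The paper does the second step by splitting into two cases --- non-square aspect among the uncovered coordinates, and covering a dimension strictly larger than $a_{D-r}$ --- and argues each rather informally via the ``oblong rectangle versus square'' observation and a face-size ratio $a_j/a_{D-r}$. Your AM--GM inequality (Step~2) together with the trivial bound $\prod_{j\in I}a_j\ge\prod_{i=0}^{r-1}a_{D-i}$ (Step~3) handles both cases in one stroke and is more transparent. Your closing remarks on length-$2$ coordinates and on why $(t/k)^{1/(D-r)}<a_{D-r}$ (via minimality of $r$ in Equation~\ref{general-torus-sse}) make explicit points the paper leaves implicit.
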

\begin{proof}
    If $A$ can be transformed into $S_r$ by changing the order of equal-sized dimensions (i.e., $A$ is a rotation of $S_r$) then the equality is trivial.
    By assumption, \(A\) and \(S_r\) both fully cover exactly \(r\) dimensions. Then, since \(A\) is not a rotation of \(S_r\), there must exist dimensions $i,j$ such that $A_i < t^\frac{1}{D-r} < A_j$. That is, the projection of \(A\) on the dimensions \(i, j\) is an oblong rectangle and not a square. In this case, the lemma follows directly by applying to \(A\) the same counting argument used in the proof of Lemma~\ref{claim:attain-cut}.
    
    Suppose $\exists j$ such that $A_j = a_j$ and $S_{r_j} \neq a_j$. Since $A$ is not a rotation of $S_r$, and by definition, $S_{r_i} = a_i$ for $a_{D-r}, \ldots, a_D$, then a face of $A$ contributes at least a factor of $\frac{a_j}{a_{D-r}}$ more edges to the perimeter than a face of $S_r$.
    Thus, $\left|E\left(S_r,\overline{S_r}\right)\right| \le \left|E\left(A,\overline{A}\right)\right|$ with equality if and only if $A$ is a rotation of $S_r$.
\end{proof}

A central implication of Theorem~\ref{thm:general-torus} is that for large values of $t$ with $r_{\text{opt}} = D-1$, the size of the perimeter is bounded below by $2\cdot \prod_{i=2}^{D} a_i$.
In particular, the bisection bandwidth is improved the closer $\frac{a_1}{t}$ is to $t^\frac{D-1}{D}$.
This is consistent with the result of~\cite{underthehood} regarding the bisection bandwidth of the Blue~Gene/Q network, and leads us to an easy corollary.
\begin{corollary}
    Let $G = \left(V, E\right)$ be the network graph of a Blue~Gene/Q machine, and let $A$ be a cuboid of midplanes with dimensions $A_1 \times \ldots \times A_4$.
    If $\exists B \subset V$ with dimensions $B_1 \times \ldots \times B_4$ such that $\left|A\right| = \left|B\right|$ and $\frac{B_1}{\left|A\right|} < \frac{A_1}{\left|A\right|}$, then $B$ has strictly greater internal bisection bandwidth than $A$.
\end{corollary}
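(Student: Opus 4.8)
The plan is to reduce the claim to the closed form for the internal bisection bandwidth of a Blue~Gene/Q partition, which is essentially what the remark preceding the corollary already isolates. Write $t = \frac{|A|}{2}$ for the bisection size. Viewed at the level of compute nodes, the partition $A_1 \times \ldots \times A_4$ (a cuboid of midplanes, $A_1 \ge \ldots \ge A_4$) is a $5$-torus with dimensions $4A_1 \times 4A_2 \times 4A_3 \times 4A_4 \times 2$; since every partition contains at least one whole midplane we have $A_1 \ge 1$, hence $4A_1 \ge 4 > 2$, so the longest node-level dimension is $4A_1$ and the canonical ordering at the node level agrees with the midplane ordering. I would then apply Theorem~\ref{thm:general-torus} to this torus with $t$ the half-size and argue, as noted just before the corollary, that the minimizing index is $r = D-1 = 4$: the $r=4$ term equals $2\prod_{i\ge 2} a_i = \frac{|V|}{2A_1}$, and comparing it against the terms for $r<4$ using only $a_1 \ge a_2 \ge \ldots \ge a_5$ together with $t = \tfrac12\prod_i a_i$ reduces each comparison to an inequality of the shape $\prod_{i\in I} a_i \le (\text{const}\ge 1)\cdot a_1^{\ell}$ over a sub-collection $I$ of the $a_i$, which holds because every $a_i \le a_1$. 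Alternatively, one may simply quote the value $2\cdot\frac{N}{L}\cdot B$ from~\cite{underthehood}. Either route shows the internal bisection bandwidth of $A$ equals $c\cdot\frac{|A|}{A_1}$ for a constant $c$ depending only on the midplane geometry and the single-link capacity $B$, and likewise that of $B$ is $c\cdot\frac{|B|}{B_1}$.

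Given this, the corollary follows at once: the hypothesis $|A| = |B|$ equalises the numerators, and $\frac{B_1}{|A|} < \frac{A_1}{|A|}$ is just $B_1 < A_1$ after clearing the positive factor $|A|$, so $c\cdot\frac{|B|}{B_1} > c\cdot\frac{|A|}{A_1}$, i.e. $B$ has strictly greater internal bisection bandwidth than $A$.

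The only real content is the first paragraph, and the hard part will be showing that an optimal bisecting set is a half-slab orthogonal to the longest dimension, equivalently that $r_{\text{opt}} = D-1$ when $t = \frac{|V|}{2}$; this is exactly where the sorted-dimension hypothesis is used, and it is also where one must check that the fixed length-$2$ dimension never becomes the longest (it cannot, since $A_1 \ge 1$). A minor point I would flag with a sentence is that Theorem~\ref{thm:general-torus} as proven bounds the cut only over cuboid subsets; that is enough here, because the candidate bisecting set is itself a cuboid, and its optimality over all subsets for these particular tori is precisely the Blue~Gene/Q bisection statement of~\cite{underthehood}.
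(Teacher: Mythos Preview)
Your proposal is correct and matches the paper's intended reasoning. The paper gives no explicit proof of the corollary: it is stated immediately after the remark that for $r_{\text{opt}} = D-1$ the perimeter is $2\prod_{i\ge 2} a_i$, together with the Blue~Gene/Q bisection formula $2\cdot N/L\cdot B$ from~\cite{underthehood}, and is declared an ``easy corollary'' of those two facts. Your argument unpacks precisely this, deriving the bisection value $c\cdot|A|/A_1$ either from Theorem~\ref{thm:general-torus} with $r=D-1$ or directly from the cited formula, and then comparing. Your verification that $r_{\text{opt}}=D-1$ at $t=|V|/2$ via the sorted-dimension inequalities, and your explicit note that Theorem~\ref{thm:general-torus} as proved covers only cuboid bisecting sets (so one ultimately leans on~\cite{underthehood} for optimality over all subsets), are both more careful than what the paper itself spells out.
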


\subsection{Analysis of Blue~Gene/Q Systems}
We apply Lemma~\ref{claim:attain-cut} to the partitions in Mira and JUQUEEN, and obtain their bisection bandwidths.
Using Lemma~\ref{claim:subsume-minimal-dims}, partition geometries with improved bisection bandwidth are found.

\subsubsection*{Mira}\label{sub:mira-analysis}
Not all cuboids of midplanes are permitted by Mira's scheduler.
There is a predefined list of partitions that may be used (see Table~\ref{tbl-mira-parts-all} in Appendix B).
Where possible, we propose partitions of identical size and greater internal bisection bandwidth (see Table~\ref{tbl-mira-summary} and Figure~\ref{plt:mira-partitions}).
With the assistance of the operators of Mira, we were able to allocate the new partitions for the duration of our experiments.
This allowed us to conduct benchmark comparisons (see Section~\ref{section:evaluation}).
\subsubsection*{JUQUEEN}\label{sub:juqueen-analysis}
Partitions of JUQUEEN's network can be any cuboids of midplanes that fit inside the full network.
Users may request partitions by specifying either their exact geometry in midplanes, or by specifying only the overall size.
For some sizes, partitions both optimal and sub-optimal in terms of internal bisection bandwidth are permissible by the job scheduler.
When only a partition size is specified, inconsistent performance may occur if one part of a user's executions is allocated optimal partitions, and another part is not.

\begin{figure}[t]
	\includegraphics{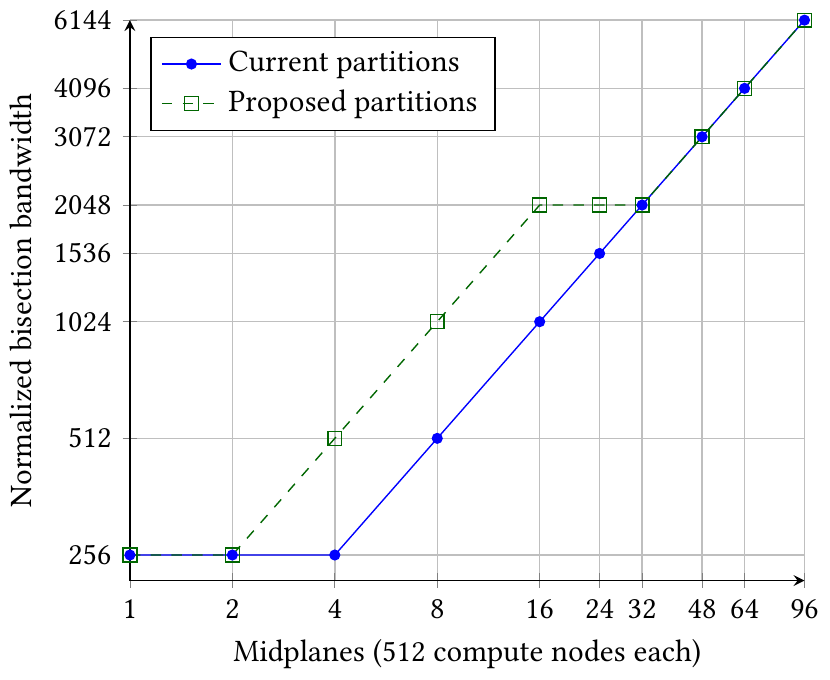}
	\caption{Mira: Normalized bisection bandwidth of currently-defined and proposed partition geometries. Each link contributes $1$ unit of capacity.}\label{plt:mira-partitions}
\end{figure}

\begin{table}[ht!]
\caption{Mira: partial list of normalized bisection bandwidths of current and proposed partitions, showing only rows where the bisection is increased. Full list in Table~\ref{tbl-mira-parts-all}, Appendix B.}\label{tbl-mira-summary}
\resizebox{\columnwidth}{!}{
    \begin{tabular}{|l|c||c|c||c|c|} \hline
    \thead{$P$} &  \thead{Midplanes} & \thead{Current Geometry}         & \thead{BW}  & \thead{Proposed Geometry}            & \thead{Proposed BW}\\ \hline
    $2048$      &  $ 4 $             & $4 \times 1 \times 1 \times 1 $  & $256$       & $2 \times 2 \times 1 \times 1$  & $512$         \\ \hline
    $4096$      &  $ 8 $             & $4 \times 2 \times 1 \times 1 $  & $512$       & $2 \times 2 \times 2 \times 1$  & $1024$        \\ \hline
    $8192$      &  $ 16 $            & $4 \times 4 \times 1 \times 1 $  & $1024$      & $2 \times 2 \times 2 \times 2$  & $2048$        \\ \hline
    $12288$     &  $ 24 $            & $4 \times 3 \times 2 \times 1 $  & $1536$      & $3 \times 2 \times 2 \times 2$  & $2048$        \\ \hline
    \end{tabular}
}
\end{table}

\begin{figure}
	\includegraphics{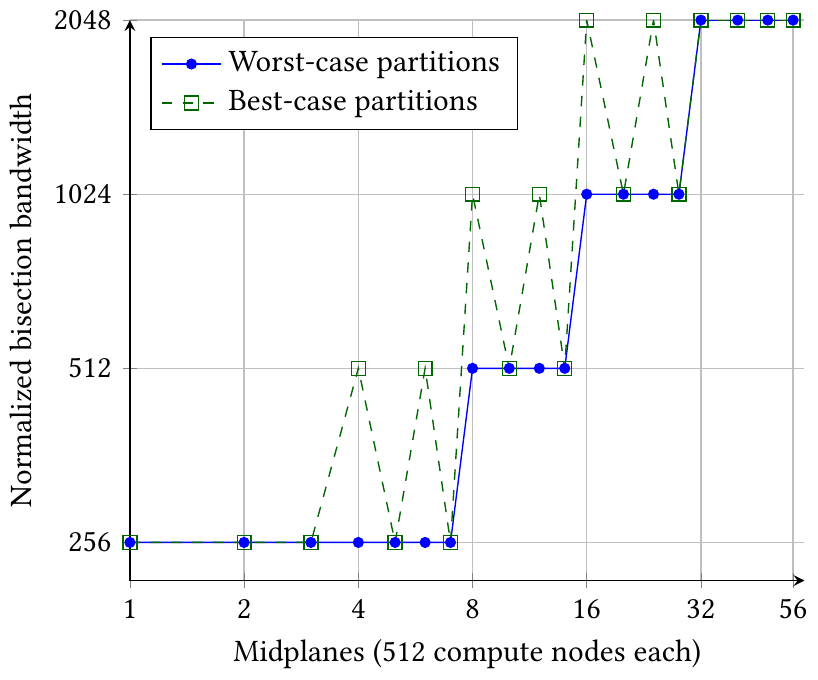}
	\caption{JUQUEEN: Normalized bisection bandwidth of best and worst-case partition geometries. Each link contributes $1$ unit of capacity. The `spiking' drops correspond to partitions whose size requires them to be ring-shaped, and hence have small bisection bandwidth.}\label{juqueen-partitions-inplace}
\end{figure}

\begin{table}
\caption{JUQUEEN: partial list of normalized bisection bandwidths of optimal and worst-case partitions, showing only rows where best and worst cases differ. Full list in Table~\ref{tab:juqueen-full-allocations}, Appendix B.} \label{tab:juqueen-improved-allocations}
\resizebox{\columnwidth}{!}{%
\begin{tabular}{|l|c||c|c||c|c|} \hline
\thead{$P$} & \thead{Midplanes}           & \thead{Worst Geometry}         & \thead{Worst BW}  &\thead{Best Geometry}             & \thead{Best BW}    \\   \hline
$2048 $     & $ 4 $           & $4 \times 1 \times 1 \times 1$      &  $256$             &$2 \times 2 \times 1 \times 1$    &  $512$  \\   \hline
$3072 $     & $ 6 $           & $6 \times 1 \times 1 \times 1$      &  $256$             &$3 \times 2 \times 1 \times 1$    &  $512$  \\   \hline
$4096 $     & $ 8 $           & $4 \times 2 \times 1 \times 1$      &  $512$             &$2 \times 2 \times 2 \times 1$    &  $1024$ \\   \hline
$6144 $     & $ 12 $           & $6 \times 2 \times 1 \times 1$      &  $512$             &$3 \times 2 \times 2 \times 1$    &  $1024$ \\   \hline
$8192 $     & $ 16 $           & $4 \times 2 \times 2 \times 1$      &  $1024$            &$2 \times 2 \times 2 \times 2$    &  $2048$ \\   \hline
$12288$     & $ 24 $           & $6 \times 2 \times 2 \times 1$      &  $1024$            &$3 \times 2 \times 2 \times 2$    &  $2048$ \\   \hline
\end{tabular}%
}
\end{table}

\section{Experiments}\label{section:evaluation}
We support our theoretical predictions with the following experiments:
\begin{inparaenum}[(A)]
    \item bisection pairing\label{experiment-ping-pong};
    \item matrix multiplication\label{experiment-algo}; and
    \item simulation of strong scaling test\label{experiment-juqueen-spikes}.
\end{inparaenum}

The proposed partitions on Mira were made available for our experiments by the generous assistance of the system operators of Mira, who let us use a temporarily modified processor allocation policy. This did not require modifying the network's physical structure, but rather only the software-defined policy.

\subsection{Bisection pairing experiment}
\paragraph*{Experimental settings}
We performed a ping-pong benchmark using the \emph{furthest-node} scheme outlined in~\cite{underthehood}, which pairs nodes that are located at a maximal number of hops from each other.
The benchmark was performed as follows: each pair of compute nodes simultaneously sends to and receives from its counterpart a message of fixed size.
This was repeated for $30$ rounds without synchronization across distinct pairs.
The first $4$ rounds were treated as warm-up, and were not counted in the total time.
To prevent unexpected behaviors due to caching effects, the messages were randomly generated between each round.
A single link in the Blue~Gene/Q network topology has a bandwidth of $2$~Gigabyte per second per direction~\cite{underthehood}, and so the total communication volume between each pair of ranks was set to \(2\) Gigabytes, broken into \(16\) chunks sized \(0.1342\) Gigabytes each, to maximize the induced contention and its visibility.
We compared the performance of currently-used partitions against the proposed partitions on Mira, and measured the average time required for a pair of nodes to complete all rounds.
This was replicated on JUQUEEN, where we compared best-case and worst-case partitions.

\begin{figure}
	\includegraphics{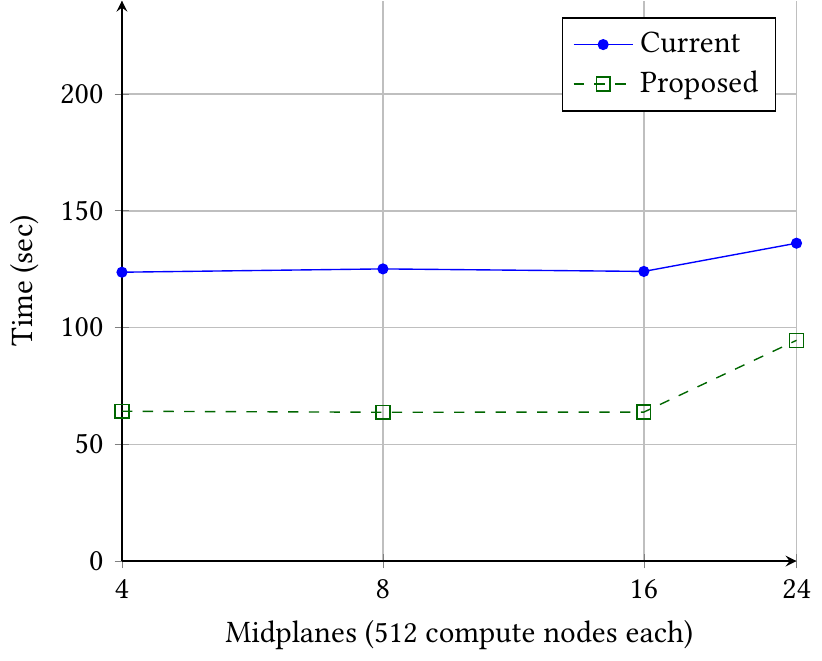}
	\caption{Mira: Bisection pairing experiment, using 4 warm-up rounds and 26 communication rounds, with messages of size $0.1342$ Gigabyte.}\label{plt-mira-pingpong}
\end{figure}

\begin{figure}
	\includegraphics{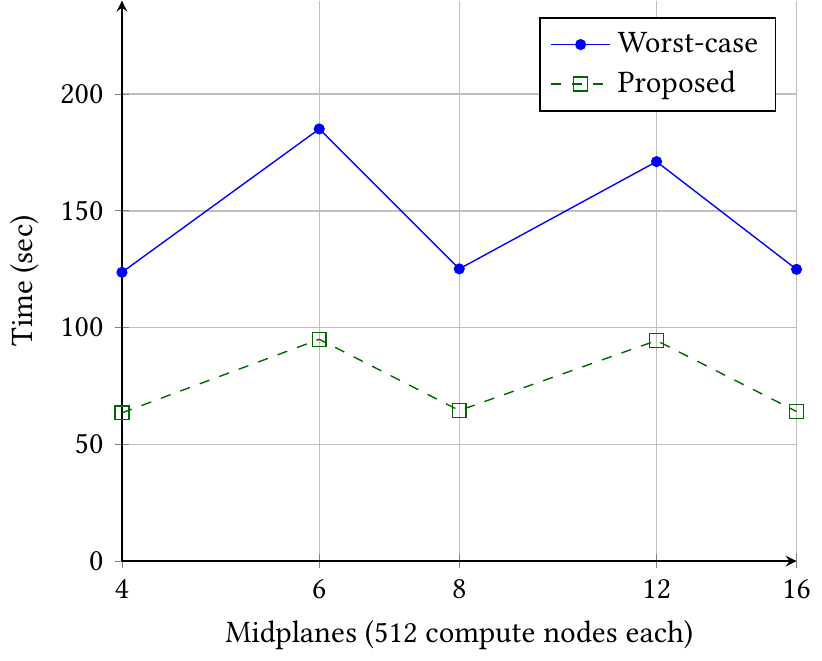}
	\caption{JUQUEEN: Bisection pairing experiment, using 4 warm-up rounds and 26 communication rounds, with messages of size $0.1342$ Gigabyte. As shown in Table 2, the average bisection bandwidth per node is identical for the \(4\) and \(8\) midplane partitions, but is \(50\%\) smaller for the \(6\) midplane partition. This is consistent with the observed results.}\label{plt-juqueen-pingpong}
\end{figure}

\paragraph*{Results}
The results for Mira and JUQUEEN are described in Figure~\ref{plt-mira-pingpong} and Figure~\ref{plt-juqueen-pingpong}, respectively.
On both Mira and JUQUEEN, the difference in average execution time is at least a factor of $1.92$ where the predicted factor is $2.00$ (except for $24$ midplanes on Mira, where it is $1.44$ and $1.50$, respectively).
This confirms the impact of partition geometry on the network contention and execution time, and shows our predicted speedup is attainable for contention-bound workloads on both systems.
An unexpected difference can be seen between the currently-defined and proposed partitions on Mira when running on $16$ and $24$ midplanes.
The $9.7\%$ increase between the currently-defined $16$ and $24$ partitions may be attributed to a combination of noise and low path diversity relative to the other partition geometries worsening the contention. In addition, the fact that some of the network links of the size $3$ dimension in the $24$ midplane partition are only utilized in one direction may have also caused a mild increase in effective resource contention.
For the proposed partitions, the increase between $16$ and $24$ partitions is expected, since the node count was increased by a factor of $1.5$ while the bisection bandwidth remained constant.
In summary, the bisection pairing experiment results agree almost perfectly with our predictions.

\subsection{Matrix multiplication experiment}
\paragraph*{Experimental settings}\label{par:matrix-limitations}
In order to measure the impact of our findings on real-life applications, we benchmark the performance of the Strassen-Winograd matrix multiplication algorithm.
We used the same set of parameters for equal-sized partitions. However, parameters were adjusted based on the partition sizes.
Table~\ref{tab:mira-matrix-params} details the parameters of each execution.
\begin{table}
\caption{Parameters of the matrix multiplication experiment on Mira.}\label{tab:mira-matrix-params}
\resizebox{\columnwidth}{!}{%
\begin{tabular}{|c|c||c|c|c|c|} \hline
\thead{$P$} & \thead{Midplanes} & \thead{MPI Ranks} & \thead{Max. active cores} & \thead{Avg. cores per proc}      & \thead{Matrix dimension} \\   \hline
$2048 $     & $ 4 $             & $31213$           &  $16$                        & $15.24$                          & $32928$                  \\   \hline
$4096 $     & $ 8 $             & $31213$           &  $8$                         & $7.62$                           & $32928$                   \\   \hline
$8192 $     & $ 16 $            & $31213$           &  $4$                         & $3.81$                           & $32928$                   \\   \hline
$12288$     & $ 24 $            & $117649$          &  $16$                        & $9.57$                           & $21952$                   \\   \hline
\end{tabular}%
}
\end{table}

We used a parallel implementation by~\cite{caps,caps-impl} on random inputs between the different partition geometries.
The experimental constraints of that~\cite{caps,caps-impl} hold here.
Namely, there must be exactly $f\cdot7^k$ MPI ranks, where $f$ and $k$ are integers and $1 \le f \le 6$, and the matrix dimension must be a multiple of $f \cdot 2^r \cdot 7^{\left\lceil\frac{k}{2}\right\rceil}$.
We could not disable some of the compute nodes in a partition, as the additional network resources belonging to those `disabled' nodes would still be utilized by the system.
However, Mira has no partitions that contain exactly $7^k$ midplanes for any $k > 1$.
We therefore used multiple cores in each processor in order to create the required rank count, and tried to minimize the imbalance in compute and communication costs between the physical processors.
Parameter selection is described in Table~\ref{tab:mira-matrix-params}. For example, the execution on \(8\) midplanes had a total of \(31,213\) MPI ranks, and each compute node was allowed to use up to 8 cores (where each core may only be associated with a single rank).

\paragraph*{Results}
Time spent performing computation does not significantly differ between partition geometries of the same size.
These computation costs are $0.554, 0.5115, 0.4965$ and $0.0604$ seconds for $4, 8, 16$ and $24$ midplanes, respectively.
Communication costs of runs using proposed partitions were smaller by factors of $\times 1.37$ up to $\times 1.52$ than all executions which utilized currently-defined partitions (see Figure~\ref{plt-mira-matrix}).
The total wall-clock time was smaller by factors of $\times 1.08$ up to $\times 1.22$, due to the common computation costs.
\begin{figure}[h]
	\includegraphics{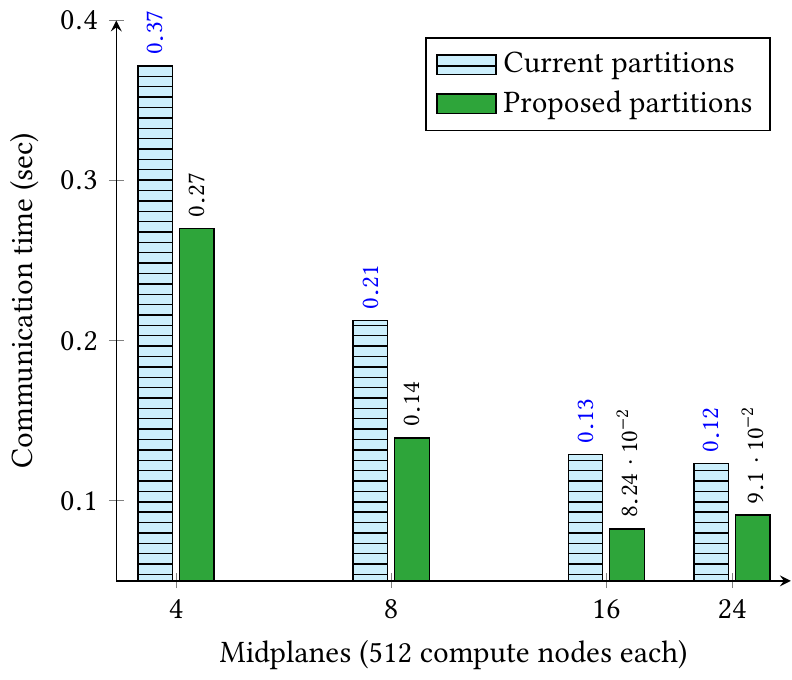}
	\caption{Mira: Matrix multiplication experiment. Three executions were performed for each midplane count and each partition type.
		Costs offset by communication-hiding are not presented. For both partition types, they are $0.059, 0.067, 0.099$ and $0$ seconds for $4,8,16$ and $24$ midplanes, respectively. Results are described using communication time instead of wallclock as the additional computation time for identical workloads is not relevant to the contention costs.
	}\label{plt-mira-matrix}
\end{figure}

\subsection{Simulation of strong scaling test}
\paragraph*{Motivation}
When optimal and sub-optimal partitions are randomly selected by the job scheduler, unnoticed variations in the bisection bandwidth can potentially result in false conclusions regarding the scaling behavior of an algorithm.
The purpose of this experiment is to test the possibility of contention costs to cause a parallel algorithm that has good strong scaling properties to appear as though it does not.
For example, consider the results of the bisection pairing experiment on JUQUEEN (see Figure~\ref{plt-juqueen-pingpong}) and suppose the runs on up to $6$ midplanes are assigned only proposed partitions, but the runs on $8$ and $12$ midplanes are assigned only worst-case partitions.
Without knowledge of the bisection bandwidths available to each execution, the runtime may seem to increase linearly with midplane count, which is clearly incorrect.

\paragraph*{Experimental settings}
We used the same code as in Experiment~\ref{experiment-algo}, and were subject to the same constraints in parameter selection.
We could not use more than $3$ distinct midplane counts in the experiment without altering other parameters such as the matrix dimension.
\begin{table}[h]
\caption{Strong scaling experiment parameters on Mira, performing matrix multiplication with dimension $9408$.}\label{tab:mira-scaling-params}
\resizebox{\columnwidth}{!}{%
\begin{tabular}{|l|c||c|c|c|c|c|} \hline
\thead{$P$} & \thead{Midplanes} & \thead{MPI Ranks} & \thead{Max. active cores}    & \thead{Avg. cores per proc} & \thead{Current Bw} & \thead{Proposed BW}\\   \hline
$1024 $     & $ 2 $             & $2401$            &  $4$                         & $2.34$                      & $256$ & $256$ \\   \hline
$2048 $     & $ 4 $             & $4802$            &  $4$                         & $2.34$                      & $256$ & $512$  \\   \hline
$4096 $     & $ 8 $             & $9604$            &  $4$                         & $2.34$                      & $512$ & $1024$   \\   \hline
\end{tabular}%
}
\end{table}

In order to show two scalability plots with a common point, we chose midplane sizes $2, 4$ and $8$, and a matrix dimension of $9408$ storing double-precision values.
There is only one way to define a cuboid of $2$ midplanes, and so the smallest execution used a partition common to both the current and proposed geometries.

\begin{figure}[h]
	\includegraphics{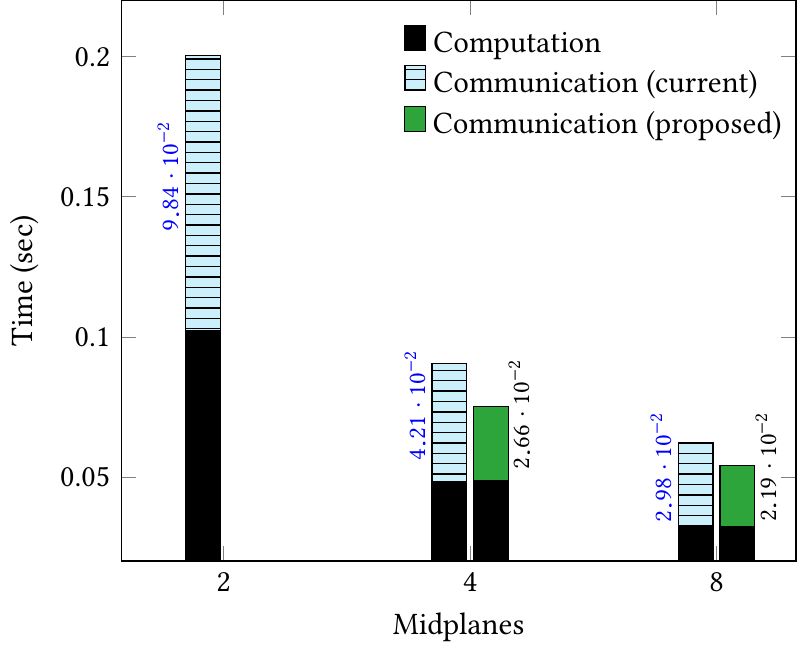}
	\caption{Mira: Strong-scaling experiment. The run on $2$ midplanes allows only one partition geometry. Floating values indicate communication times. Communication costs hidden by computations were not counted.} \label{plt-mira-scaling}
\end{figure}

\paragraph*{Results}
This experiment was partly successful, as the results indicate a linear decrease of communication costs when strong scaling from $2$ midplanes to $8$ using the proposed partition geometry, but only a sub-linear decrease when using the current partitions.
Therefore, a user should be advised that a test of the computation's strong scaling behavior using the currently-defined partitions may incorrectly indicate it cannot linearly scale beyond $4$ midplanes.

Runs on all partition types exhibited super-linear scaling of the communication costs between $2$ and $4$ midplanes, regardless of partition geometry used. As the difference in bisection bandwidth is exactly $\times 2$, link contention alone cannot account for the effect.

The exact source of the super-linear speedup is not fully understood. It may be related to the fact that the data fits entirely within the shared L2 cache when using \(4\) and \(8\) midplanes, but not when using \(2\) midplanes.
Perhaps this allowed the dedicated communications core to more efficiently move the data.

More specifically, each of Mira's processors has $32$ Megabyte of L2 cache storage shared by all its cores, and an additional core used for communications. This means $32, 64,$ and $128$ Gigabyte of combined L2 storage for $2,4,$ and $8$ midplanes, respectively.
The execution pattern the BFS-DFS matrix multiplication algorithm used was $4$ BFS steps, requiring a combined minimum of $3 \cdot \left(\frac{7}{4}\right)^4 \cdot 8 \cdot 9408^2$ bytes, or $18.55$ Gigabyte in order to store all matrices across all processors, added with a similar amount of space for the communications library buffers.
When the overall space requirement exceeds the L2 memory of $2$ midplanes, this results in cache misses and use of the slower RAM, hence a slowdown for the executions on $2$ midplanes.
This somewhat muddles the visibility of scalability properties of the fast matrix multiplication computation.
It remains evident that the scaling is better when the proposed geometries are used.
Specifically, the computation on $2$ midplanes exhibits a $\times 4.4$ decrease in communication costs on $8$ midplanes in a proposed geometry, and a $\times 3.3$ decrease when using the current geometry.
Therefore, the fast matrix multiplication algorithm may be incorrectly surmised to have a smaller strong scaling range on Mira if evaluated only using the current partition geometries.
More extreme disparities are possible: given the bisection bandwidth of the $2$ and $4$ midplane partitions in Table~\ref{tab:mira-scaling-params}, a computation's wallclock time may remain identical on both $2$ and $4$ midplanes even if the computation can linearly scale to $16$ midplanes and beyond.
\section{Discussion}\label{section:discussion}
In this work, we focus on potential performance boosts due to improved internal bisection bandwidth of partitions.
Determining the importance of such speedups relative to other possible machine design optimization goals is beyond the scope of this work.
Particularly, there are many motivations to the design and installation of specific supercomputer systems, as well as to setting a processor allocation policy. 
Such motivations may include computational kernels or specific job sizes deemed particularly important for the system, or even specific software that is intended to be executed often.
Further reasons may include packing of jobs affecting overall system utilization, cabling complexity, cooling, and ease of access and maintenance.

\subsection*{Application to other topologies}\label{sub:tech-transfer}
We discussed the IBM Blue Gene/Q network topology in great detail, but
our method applies to arbitrary network topologies if edge-isoperimetric problems can be efficiently solved on their network graphs.
When the network graph is regular and has uniform link capacity -- which is the case in almost all networks of supercomputers (except Dragonfly; see below) -- isoperimetric analysis is sufficient to determine the small-set expansion of the graph. This provides additional information to merely the bisection bandwidth, and can predict contention bottlenecks at locations other than the network bisection~\cite{contention}.

The ToFu interconnect used by the K~Computer~\cite{ajima2012tofu} is a high-dimensional torus with certain similarities to Blue~Gene/Q.
Torus networks of lower dimension, such as the Cray XK7 $3D$-torus machine Titan~\cite{supercomp-titan}, may require a formulation of the edge-isoperimetric problem that considers weighted edges.

For hypercube-based supercomputers such as Pleiades~\cite{supercomp-pleiades}, the edge-isoperimetric problem is long solved in~\cite{harper1964optimal}, and so our method is directly usable.

For Fat-Tree topologies, the application of our method is more challenging. If the processor allocation policy permits distinct jobs to share network resources, then the available link capacity may be smaller than isoperimetric analysis alone would indicate.
If sharing of network resources is forbidden, then the policy is expected to be so constrained that our method will not be able to obtain improvements.

HyperX networks are Cartesian products of cliques $K_{a_1},\ldots,K_{a_D}$.
The number of cliques in the product and their exact sizes are both variable. Each clique may have a different link capacity; when all links have the same capacity, the HyperX network is said to be \emph{regular}.
Finding an optimal HyperX structure for a fixed vertex count is performed by exhaustive search~\cite{hyperx}.
The network bisection bandwidth is attained by selecting half of the vertices in $K_i$ for some $1 \le i \le D$ and all vertices in $K_{j}$ for $i \neq j$~\cite{hyperx}.
The edge-isoperimetric problem for regular HyperX network graphs is solved in~\cite{lindsey1964assignment}, by choosing vertices of the product cliques in order of descending size. 

Dragonfly networks~\cite{kim2008technology} as implemented in the Cray XC series~\cite{faanes2012cray} are a collection of `groups' each containing up to $96$ Aries routers, where each group is an instance of $K_{16} \times K_6$.  
Links belonging to the $K_6$ clique have a normalized capacity of $3$ relative to the $K_{16}$ links, requiring a weighted version of the edge-isoperimetric problem to be used.
Unlike HyperX, the Dragonfly network also contains inter-group links with a normalized capacity of $4$.
To apply our method to a Dragonfly-based system, it is necessary to model the inter-group links to create the network graph.
We are unaware of any public description of the inter-group link arrangement, but~\cite{dragonfly-global} discusses three possible schemes for such systems.
A further minor challenge is the pairing of Aries routers: unlike an edge in a simple graph, each endpoint of an inter-group link is a pair of adjacent Aries routers.
It may thus be necessary to introduce the constraint of $t$ even when considering edge-isoperimetric problems on these networks.

The Slim Fly network topology is more difficult to analyze in the general case, since the cabling layout varies greatly based on the global network size, necessitating exhaustive search~\cite{besta2014slim}.
Given the complexity of finding such constructions, the existence of a general solution to edge-isoperimetric problems that fits all possible constructions seems unlikely.

\paragraph*{Sequoia}\label{sub:sequoia}
Installed at Lawrence Livermore National Laboratory, it is the largest Blue~Gene/Q system in production.
Sequoia is ranked $6$th in the November 2017 list of top 500 supercomputers~\cite{TOP500-nov17}.
It has $98304$ compute nodes, with network size $16 \times 16 \times 16 \times 12 \times 2$, or $4 \times 4 \times 4 \times 3$ midplanes~\cite{sequoia-network}.
Sequoia's scheduler seems to support all partition geometries supportable that the Blue~Gene/Q network allows (similarly to JUQUEEN).
Hence, both optimal and sub-optimal permissible partitions may be defined for certain midplane counts.
Sequoia transitioned into classified work in $2013$~\cite{sequoia-transition}.
We thus could not perform experiments on that system, but depending on its allocation policy it may be possible to improve its network performance using our analysis.

\subsection*{Machine design}
The ratio of maximal dimension and total machine sizes influences the global bisection bandwidth, we can reason about the design of an entire Blue~Gene/Q network.
Recall that JUQUEEN has a network size of $7 \times 2 \times 2 \times 2$ ($56$ midplanes in total).
We consider similar machines with $48$ and $54$ midplanes (denoted here JUQUEEN-48 and JUQUEEN-54, respectively), with more balanced dimension sizes.
JUQUEEN-54 has dimensions $3 \times 3 \times 3 \times 2$, and JUQUEEN-48 has dimensions $ 4 \times 3 \times 2 \times 2$.

Mira has network size $4 \times 4 \times 3 \times 2$. As the networks of JUQUEEN-54 and JUQUEEN-48 are both subgraphs of Mira's, their physical construction is clearly feasible.

Both these machines have fewer midplanes than JUQUEEN, but have better (greater) bisection bandwidth due to their network sizes.
\begin{figure}
	\resizebox{\columnwidth}{!}{
	\includegraphics{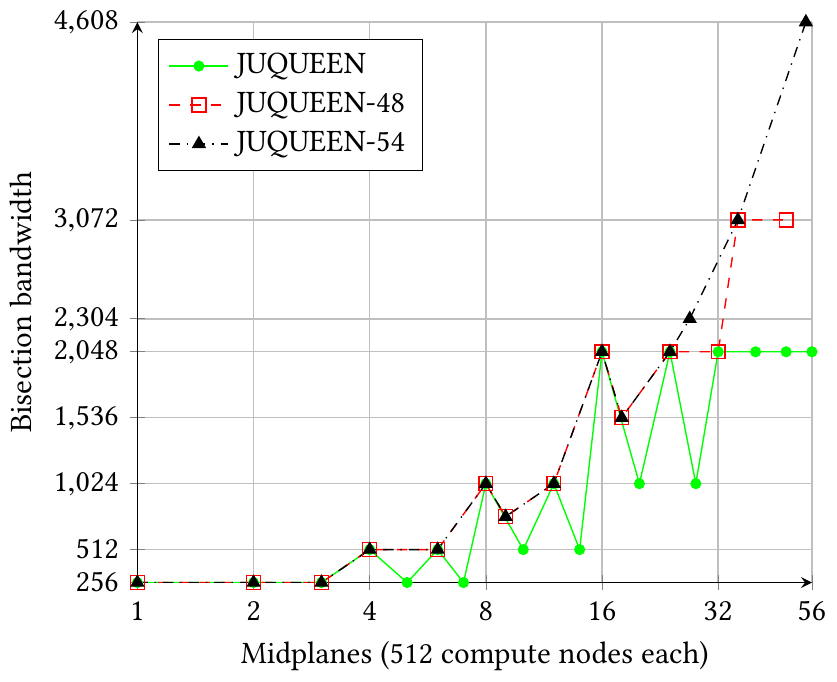}
	}
	\caption{Normalized bisection bandwidth comparison between JUQUEEN and hypothetical machines JUQUEEN-$48$ and JUQUEEN-$54$. We assume that JUQUEEN always uses best-case partitions.}\label{juqueen-partitions-theoretical}
\end{figure}
A comparison of bisection bandwidths of partitions of those theoretical machines against the optimal allocations of JUQUEEN is presented in
Figure~\ref{juqueen-partitions-theoretical}, and a full listing of optimal partitions for all three machines appears in Table~\ref{tbl-juqueen-machines}.
The network bandwidths of partitions of both theoretical machines are identical to those of JUQUEEN when utilizing smaller partitions, and strictly greater on the largest partition sizes.
Since JUQUEEN is a larger system, applications that are able to perfectly strong scale to its full size will still exhibit superior performance when executed on the entire machine.
However, in nearly all other cases, JUQUEEN cannot outperform the smaller theoretical machines.
On contention-bound workloads, the suggested machines are predicted to perform at least as well as JUQUEEN, and attain speedup factors up to~$\times 2$ and~$\times 1.5$ for JUQUEEN-54 and JUQUEEN-48, respectively.
The partition geometries of proposed machines are described in Table~\ref{tbl-juqueen-machines}.
\begin{table}[ht]\caption{Full list of best-case partitions in JUQUEEN and the two proposed machines JUQUEEN-54 and JUQUEEN-48. Dimensions are listed in sorted order, BW is normalized bisection bandwidth.}\label{tbl-juqueen-machines}
\centering
\resizebox{\columnwidth}{!}{%
        \begin{tabular}{|l|l||c|c||c|c||c|c|} \hline
\thead{$P$}& \thead{Midplanes} & \thead{JUQUEEN}      & \thead{J BW} & \thead{JUQUEEN-54} & \thead{J-54 BW} & \thead{JUQUEEN-48} & \thead{J-48 BW}\\   \hline
$512  $    &   $1$      &$1 \times 1\times 1\times 1$ &  $256$   &$1 \times 1\times 1\times 1$& $256$& $1 \times 1\times 1\times 1$& $256$ \\   \hline
$1024 $    &   $2$      &$2 \times 1\times 1\times 1$ &  $256$   &$2 \times 1\times 1\times 1$& $256$& $2 \times 1\times 1\times 1$& $256$ \\   \hline    
$1536 $    &   $3$      &$3 \times 1\times 1\times 1$ &  $256$   &$3 \times 1\times 1\times 1$& $256$& $3 \times 1\times 1\times 1$& $256$ \\   \hline
$2048 $    &   $4$      &$2 \times 2\times 1\times 1$ &  $512$   &$2 \times 2\times 1\times 1$& $512$& $2 \times 2\times 1\times 1$& $512$ \\   \hline
$2560 $    &   $5$      &$5 \times 1\times 1\times 1$ &  $256$   &                            &   &                                &       \\   \hline
$3072 $    &   $6$      &$3 \times 2\times 1\times 1$ &  $512$   &$3 \times 2\times 1\times 1$& $512$& $3 \times 2\times 1\times 1$& $512$ \\   \hline
$3584 $    &   $7$      &$7 \times 1\times 1\times 1$ &  $256$   &                            &   &                                &  \\   \hline
$4096 $    &   $8$      &$2 \times 2\times 2\times 1$ &  $1024$  &$2 \times 2\times 2\times 1$& $1024$& $2 \times 2\times 2\times 1$& $1024$ \\   \hline
$4608 $    &   $9$      &                             &          &$3 \times 3\times 1\times 1$& $768$& $3 \times 3\times 1\times 1$& $768$ \\   \hline
$5120 $    &   $10$     &$5 \times 2\times 1\times 1$ &  $512$   &                            &   &                                &  \\   \hline
$6144 $    &   $12$     &$3 \times 2\times 2\times 1$ &  $1024$  &$3 \times 2\times 2\times 1$& $1024$& $3 \times 2\times 2\times 1$& $1024$ \\   \hline
$7168 $    &   $14$     &$7 \times 2\times 1\times 1$ &  $512$   &                            &   &                             & \\   \hline
$8192 $    &   $16$     &$2 \times 2\times 2\times 2$ &  $2048$  &$2 \times 2\times 2\times 2$& $2048$& $2 \times 2\times 2\times 2$& $2048$ \\   \hline
$9216 $    &   $18$     &                             &          &$3 \times 3\times 2\times 1$& $1536$& $3 \times 3\times 2\times 1$& $1536$ \\   \hline
$10240$    &   $20$     &$5 \times 2\times 2\times 1$ &  $1024$  &                            &   &                             & \\   \hline
$12288$    &   $24$     &$3 \times 2\times 2\times 2$ &  $2048$  &$3 \times 2\times 2\times 2$& $2048$& $3 \times 2\times 2\times 2$& $2048$ \\   \hline
$13824$    &   $27$     &                             &          &$3 \times 3\times 3\times 1$& $2304$&                             & \\   \hline
$14336$    &   $28$     &$7 \times 2\times 2\times 1$ &  $1024$  &                            &   &                             & \\   \hline
$16384$    &   $32$     &$4 \times 2\times 2\times 2$ &  $2048$  &                            &   & $4 \times 2\times 2\times 2$& $2048$ \\   \hline        
$18432$    &   $36$     &                             &          &$3 \times 3\times 2\times 2$& $3072$& $3 \times 3\times 2\times 2$& $3072$ \\   \hline
$20480$    &   $40$     &$5 \times 2\times 2\times 2$ &  $2048$  &                            & &                             & \\   \hline
$24576$    &   $48$     &$6 \times 2\times 2\times 2$ &  $2048$  &                            & & $4 \times 3\times 2\times 2$& $3072$ \\   \hline            
$27648$    &   $54$     &                             &          &$3 \times 3\times 3\times 2$& $4608$&                             & \\   \hline
$28672$    &   $56$     &$7 \times 2\times 2\times 2$ &  $2048$  &                            & &                             &  \\   \hline
        \end{tabular}%
    }
\end{table}

\paragraph*{Future Work}
Our conjecture about the optimality of Equation~\ref{general-torus-sse} for arbitrary subsets remains open.

We believe further speedups on Blue~Gene/Q can be demonstrated for several kernels of interest.
Direct $N$-body simulation have greater asymptotic contention cost lower bounds than fast matrix multiplication~\cite{contention}, increasing the impact of the internal bisection bandwidth.
High-performance implementations of FFT, classical matrix multiplication, and other common kernels may better utilize the available hardware resources, decreasing the ratio of time spent performing computation.
For both those cases, the impact of internal bisection bandwidth on wallclock time is predicted to be greater than in Experiment~\ref{experiment-algo}.

Similar isoperimetric analysis can be conducted on other networks to potentially improve processor allocation policies, and to ensure contention-related effects do not unnecessarily inhibit scaling.

Testing bisection sensitivity of machine benchmarks can be done by comparing the score of equal-sized partitions with different bisection bandwidths.

Designing new network topologies, and evaluating existing ones, should be done with their partitioning constraints and internal bisection bandwidths in mind.
Such considerations can reveal specific partition sizes for which the network performs poorly, and makes it easier to solve such issues.

Processor allocation policy decisions of job schedulers can be improved if they are informed whether a given computation is expected to be network-bound or not.
For example, if a partition with sub-optimal bisection bandwidth is currently available for use, a scheduler may decide whether to allocate it to a pending job, or to wait for a partition with better bisection bandwidth.
This decision can be contingent on a user-provided hint which indicates whether the job is expected to be contention-bound or not.

\subsection*{Conclusions}
We presented a method for analyzing processor allocation policies using an isoperimetric analysis of the network graph, and determining whether any partition geometries induce sub-optimal internal bisection bandwidth.
We applied our method to two leading Blue~Gene/Q supercomputers; demonstrated performance improvements for various workloads; and have shown how to apply our method to other networks.

\section*{Acknowledgment}
We thank Ivo Kabadshow and Dorian Krause of J{\"u}lich Supercomputing Centre for their help in arranging the JUQUEEN experiments.
We thank Adam Scovel of Argonne National Laboratory for his help and support in setting up custom partitions on Mira.
Our experiments could not have been done without their help.

The authors gratefully acknowledge the Gauss Centre for Supercomputing e.V. (www.gauss-centre.eu) for funding this project by providing computing time through the John von Neumann Institute for Computing (NIC) on the GCS Supercomputer JUQUEEN at J{\"u}lich Supercomputing Centre (JSC). 
This research used resources of the Argonne Leadership Computing Facility, which is a DOE Office of Science User Facility supported under Contract DE-AC02-06CH11357.

Research is supported by grants 1878/14, and 1901/14 from the Israel Science Foundation (founded by the Israel Academy of Sciences and Humanities) and grant 3-10891 from the Ministry of Science and Technology, Israel.
Research is also supported by the Einstein Foundation and the Minerva Foundation.
This work was supported by the PetaCloud industry-academia consortium.
This research was supported by a grant from the United States-Israel Bi-national Science Foundation (BSF), Jerusalem, Israel.
This project has received funding from the European Research Council (ERC) under the European Union's Horizon 2020 research and innovation programme (grant agreement No 818252). This work was supported by The Federmann Cyber Security Center in conjunction with the Israel national cyber directorate. 
\balance
\bibliographystyle{plain}
\bibliography{references}
\clearpage
\appendix

\section{Machine Partitions}
\begin{table}[h]
\caption{Mira: normalized bisection bandwidths of all current and proposed partitions.}\label{tbl-mira-parts-all}
\resizebox{\columnwidth}{!}{
    \begin{tabular}{|l|c||l|c||l|c|} \hline
    \thead{$P$} &  \thead{Midplanes} & \thead{Current Geometry}         & \thead{BW}  & \thead{New Geometry}            & \thead{New BW}\\ \hline
    $512$       &  $ 1 $             & $1 \times 1 \times 1 \times 1 $  & $256$       &                                 &               \\ \hline
    $1024$      &  $ 2 $             & $2 \times 1 \times 1 \times 1 $  & $256$       &                                 &               \\ \hline
    $2048$      &  $ 4 $             & $4 \times 1 \times 1 \times 1 $  & $256$       & $2 \times 2 \times 1 \times 1$  & $512$         \\ \hline
    $4096$      &  $ 8 $             & $4 \times 2 \times 1 \times 1 $  & $512$       & $2 \times 2 \times 2 \times 1$  & $1024$        \\ \hline
    $8192$      &  $ 16 $            & $4 \times 4 \times 1 \times 1 $  & $1024$      & $2 \times 2 \times 2 \times 2$  & $2048$        \\ \hline
    $12288$     &  $ 24 $            & $4 \times 3 \times 2 \times 1 $  & $1536$      & $3 \times 2 \times 2 \times 2$  & $2048$        \\ \hline
    $16384$     &  $ 32 $            & $4 \times 4 \times 2 \times 1 $  & $2048$      &                                 &               \\ \hline
    $24576$     &  $ 48 $            & $4 \times 4 \times 3 \times 1 $  & $3072$      &                                 &               \\ \hline
    $32768$     &  $ 64 $            & $4 \times 4 \times 2 \times 2 $  & $4096$      &                                 &               \\ \hline
    $49152$     &  $ 96 $            & $4 \times 4 \times 3 \times 2 $  & $6144$      &                                 &               \\ \hline
    \end{tabular}
}
\end{table}

\begin{table}[h]\caption{Full list of JUQUEEN allocation best and worst cases by compute node count $P$. Dimensions are listed in sorted order, BW is bisection bandwidths normalized by link capacity.}\label{tab:juqueen-full-allocations}
\centering
\resizebox{\columnwidth}{!}{%
        \begin{tabular}{|l|c||c|c||c|c|} \hline
            \thead{$P$}& \thead{Midplanes} & \thead{Worst-case Geometry}      & \thead{Worst BW}  & \thead{Proposed Geometry} & \thead{Proposed BW}    \\   \hline
            $512  $    &   $1$      & $1\times 1\times 1\times 1 $    &  $256$             &                                  &         \\   \hline
            $1024 $    &   $2$      & $2\times 1\times 1\times 1 $    &  $256$             &                                  &         \\   \hline    
            $1536 $    &   $3$      & $3\times 1\times 1\times 1 $    &  $256$             &                                  &         \\   \hline
            $2048 $    &   $4$      & $4\times 1\times 1\times 1 $    &  $256$             &$2 \times 2\times 1\times 1$      &  $512$  \\   \hline
            $2560 $    &   $5$      & $5\times 1\times 1\times 1 $    &  $256$             &                                  &         \\   \hline
            $3072 $    &   $6$      & $6\times 1\times 1\times 1 $    &  $256$             &$3 \times 2\times 1\times 1$      &  $512$  \\   \hline
            $3584 $    &   $7$      & $7\times 1\times 1\times 1 $    &  $256$             &                                  &         \\   \hline
            $4096 $    &   $8$      & $4\times 2\times 1\times 1 $    &  $512$             &$2 \times 2\times 2\times 1$      &  $1024$ \\   \hline
            $5120 $    &   $10$     & $5\times 2\times 1\times 1 $    &  $512$             &                                  &         \\   \hline
            $6144 $    &   $12$     & $6\times 2\times 1\times 1 $    &  $512$             &$3 \times 2\times 2\times 1$      &  $1024$ \\   \hline
            $7168 $    &   $14$     & $7\times 2\times 1\times 1 $    &  $512$             &                                  &         \\   \hline
            $8192 $    &   $16$     & $4\times 2\times 2\times 1 $    &  $1024$            &$2 \times 2\times 2\times 2$      &  $2048$ \\   \hline
            $10240$    &   $20$     & $5\times 2\times 2\times 1 $    &  $1024$            &                                  &         \\   \hline
            $12288$    &   $24$     & $6\times 2\times 2\times 1 $    &  $1024$            &$3 \times 2\times 2\times 2$      &  $2048$ \\   \hline
            $14336$    &   $28$     & $7\times 2\times 2\times 1 $    &  $1024$            &                                  &         \\   \hline
            $16384$    &   $32$     & $4\times 2\times 2\times 2 $    &  $2048$            &                                  &         \\   \hline        
            $20480$    &   $40$     & $5\times 2\times 2\times 2 $    &  $2048$            &                                  &         \\   \hline
            $24576$    &   $48$     & $6\times 2\times 2\times 2 $    &  $2048$            &                                  &         \\   \hline            
            $28672$    &   $56$     & $7\times 2\times 2\times 2 $    &  $2048$            &                                  &         \\   \hline
        \end{tabular}%
    }

\end{table}

\end{document}